\keywords{Unique neighbor, Linear code} 
\title{Bounds on Unique-Neighbor Codes}
\author[1]{Nathan Linial\thanks{Supported in part by an ERC Grant 101141253, ``Packing in Discrete Domains -- Geometry and Analysis'' and a NSF-BSF research grant ``Global Geometry of Graphs''.}}
\author[2]{Edan Orzech\thanks{Work was done while at the Hebrew University.}}
\affil[1]{%
School of Computer Science and Engineering, Hebrew University, Jerusalem 91904, Israel.

\email{nati@cs.huji.ac.il}%
}
\affil[2]{%
Department of Electrical Engineering and Computer Science, MIT CSAIL, Cambridge MA 02139, USA.

\email{iorzech@csail.mit.edu}%
}
\newcommand{\F}{\mathbb{F}}
\newcommand{\e}{\varepsilon}
\newcommand\numberthis{\addtocounter{equation}{1}\tag{\theequation}}
\DeclareMathOperator{\lcm}{lcm}
\DeclarePairedDelimiter{\floorfixed}{\lfloor}{\rfloor}
\newcommand{\floor}{\floorfixed*}
\DeclarePairedDelimiter{\ceilfixed}{\lceil}{\rceil}
\newcommand{\ceil}{\ceilfixed*}
\begin{document}

\maketitle


\begin{abstract}
    Recall that a \emph{binary linear code of length $n$} is a linear subspace $\mathcal{C} = \{x\in \mathbb{F}_2^n \mid Ax=0\}$.
    Here the \emph{parity check matrix} $A$ is a binary $m\times n$ matrix of rank $m$.
    We say that $\mathcal{C}$ has \emph{rate} $R=1-\frac mn$.
    Its \emph{distance}, denoted $\delta n$ is the smallest Hamming weight of a non-zero vector in $\mathcal{C}$.
    The \emph{rate vs.\ distance problem} for binary linear codes is a fundamental open problem in coding theory, and a fascinating question in discrete mathematics.
    It concerns the function $R_L(\delta)$, the largest possible rate $R$ for given $0\le\delta\le 1$ and arbitrarily large length $n$.
    Here we investigate a variation of this fundamental question that we describe next.
      
    Clearly, $\mathcal{C}$ has distance $\delta n$, if and only if for every $0<n'<\delta n$, every $m\times n'$ submatrix of $A$ has a row of odd weight.
    Motivated by several problems from coding theory, we say that $A$ has the \emph{unique-neighbor} property with parameter $\delta n$, if every such submatrix has a row \emph{of weight $1$}.
    Let $R_U(\delta)$ be the largest possible asymptotic rate of linear codes with a parity check matrix that has this stronger property.
    Clearly, $R_U(\cdot), R_L(\cdot)$ are non-increasing functions, and $R_U(\delta)\le R_L(\delta)$ for all $\delta$.
    Also, $R_U(0) = R_L(0) = 1$, and $R_U(1) = R_L(1) = 0$, so let $0\le\delta_U \le\delta_L\le 1$ be the smallest values of $\delta$ at which $R_U$ resp.\ $R_L$ vanish.
    It is well known that $\delta_L=\frac{1}{2}$ and we conjecture that $\delta_U$ is strictly smaller than~$\frac{1}{2}$, i.e., the rate of linear codes with the unique-neighbor property is more strictly bounded.
    While the conjecture remains open, we prove here several results supporting it.
      
    The reader is not assumed to have any specific background in coding theory, but we occasionally point out some relevant facts from that area.
\end{abstract}



\section{Introduction and Main Problems}

We consider here only binary codes $\mathcal{C}\subseteq \{0,1\}^n$ of length $n$.
As usual, we denote the \emph{rate} of $\mathcal{C}$ by $R=R(\mathcal{C})=\frac 1n\log_2|\mathcal{C}|$ and its distance by $\delta n=\text{dist}(\mathcal{C})=\min_{x\neq y,\,x,y\in \mathcal{C}} d_H(x,y)$, where $d_H$ stands for the Hamming distance.
A fundamental open problem in coding theory seeks the best possible tradeoff between $R$ and $0\le\delta\le1$.
We refer to this as

\begin{problem}\label{R_of_delta}
    Determine, or estimate the real function
    \begin{align*}
        R(\delta)=\limsup_{n\to\infty}\{R(\mathcal{C})\mid\mathcal{C}\subseteq \{0,1\}^n,~\text{dist}(\mathcal{C})\ge\delta n\}.
    \end{align*}
\end{problem}

A \emph{linear code} is a linear subspace of the vector space $\mathbb{F}_2^n$, which we identify with $\{0,1\}^n$.
Such a code can be defined in terms of a \emph{parity check matrix} $A$ which is a $\ceil{(1-R)n} \times n$ binary matrix.
Namely, $\mathcal{C}=\{x\in\F_2^n\mid Ax=0\}$.

Let $S$ be a nonempty set of columns in a binary matrix and let $z$ be the sum \emph{over the integers} of the columns in $S$.
We say that the set $S$ is \emph{$1$-free} if no entry of $z$ equals $1$, and we call $S$ \emph{even}, if all entries of $z$ are even integers.
The number of $1$-entries in a binary vector is said to be its \emph{weight} or \emph{sum}.

Here are some basic comments:

\begin{remark}
    \begin{itemize}
        \item
        A given linear code $\mathcal{C}$ can clearly have various distinct parity
        check matrices~$A$.
        While the distance of $\mathcal{C}$ does not depend on the choice of $A$,
        different parity check matrices of the same code may have different $1$-free sets.
        \item 
        If we delete from $A$ any row that linearly depends on the other rows, the code 
        $\mathcal{C}$ does not change.
        Consequently, all parity check matrices herein can, with no loss in generality be assumed to have full row rank.
        \item 
        Coding theorists refer to a $1$-free set as a \emph{stopping set}, see \cref{sec:context}.
        \item 
        We consider below also vanishingly small rates $R=o_n(1)$, in which case we may speak of an $n\times(n+f(n))$ parity check matrix with $f(n)=o(n)$.
        Although this deviates from our standard of considering length-$n$ codes, the asymptotic conclusions remain unchanged.
    \end{itemize}
\end{remark}

We are ready now to introduce the protagonists of this article.

\begin{definition}
    Let $A$ be a binary $m\times n$ matrix of rank $m$.
    Define:
    \begin{itemize}
        \item 
        The smallest cardinality of a nonempty even set of columns in $A$, i.e., the \emph{distance} of the binary linear code with parity check $A$ is denoted by $\e(A)$.
        \item
        The smallest cardinality of a nonempty $1$-free set of columns in $A$ is denoted by $u(A)$.
        \item
        The maximum value of $\e(A)$ over all binary $m\times n$ matrices is denoted $\e(m,n)$.
        \item
        The maximum value of $u(A)$ over all binary $m\times n$ matrices is denoted $u(m,n)$.
    \end{itemize}
\end{definition}

The best available estimates of $\e(m,n)$ are extensively tabulated, e.g., in \cite{grassl2007bounds}.
We seek to likewise determine or estimate the values of $u(m,n)$.
Note the following:

\begin{remark}
    \begin{itemize}
        \item
        The claim that $T\le u(m,n)$ means that there exists a binary $m\times n$ matrix~$A$ of rank $m$ where every nonempty set of at most $T$ columns has a row of weight $1$.
        \item 
        Proving that $u(m,n)\le T$ entails showing that in every binary $m\times n$ matrix $A$ of rank $m$ there is a set of $T$ or fewer columns for which no row weighs $1$.
    \end{itemize}
\end{remark}

In analogy with \cref{R_of_delta}, the following asymptotic question suggests itself:

\begin{problem}\label{L_of_delta}
    Determine, or estimate the real function
    \begin{align*}
        R_L(\delta)\coloneqq\limsup_{n\to\infty}\{R\mid\text{there exists a~} \ceil{(1-R)n} \times n \text{~binary matrix $A$ with~}\e(A)\ge\floor{\delta n}\}.
    \end{align*}
    In other words, $R_L(\delta)$ is the smallest real $R$ such that 
    \begin{align*}
        \begin{array}{c}
            \text{For any~} \rho > R \text{~and large enough~}n, \text{~every~} \ceil{(1-\rho)n} \times n \text{~binary matrix}\\[1em]
            \text{has an even set of~}\le\delta n\text{~columns.}
        \end{array}
    \end{align*}
\end{problem}

And for $1$-free sets:

\begin{problem}\label{U_of_delta}
    Determine, or estimate the real function
    \begin{align*}
        R_U(\delta)\coloneqq\limsup_{n\to\infty}\{R\mid\text{there exists a~} \ceil{(1-R)n} \times n \text{~binary matrix with~}u(A)\ge\floor{\delta n}\}.
    \end{align*}
    In other words, $R_U(\delta)$ is the smallest real $R$ such that 
    \begin{align*}
        \begin{array}{c}
            \text{For any~} \rho > R \text{~and large enough~}n, \text{~every~} \ceil{(1-\rho)n} \times n \text{~binary matrix}\\[1em]
            \text{has a $1$-free set of~}\le\delta n\text{~columns.}
        \end{array}
    \end{align*}
\end{problem}

Clearly,
\begin{align*}
    R_U(\delta)\le R_L(\delta)\le R(\delta)\text{~for all~}0\le\delta.
\end{align*}
At present, we cannot even rule out the possibility that all these three functions are, in fact, identical.
It is easily verified that (i) All three are non-increasing functions of $\delta$, and (ii) $R(0)=R_L(0)=R_U(0)=1$.
It is also well known that $R(\delta), R_L(\delta)$ are positive for $\delta<\frac12$ (by the Gilbert-Varshamov bound for linear codes~\cite{gilbert1952comparison,varshamov1957estimate}) and $R(\delta), R_L(\delta)=0$ for $\frac12\le\delta$ (e.g., by the Plotkin bound~\cite{plotkin1960binary}).

We believe that the strict inequality $R_U(\delta)<R_L(\delta)$ holds for at least some of the range 
$0<\delta<\frac12$.
More specifically that $R_U$ vanishes already at some $\delta_0<\frac12$.
Concretely, we state

\begin{conjecture}\label{main:conj}
    There exists some $\epsilon_0>0$ such that function $R_U(\delta)$ vanishes already at 
    $\delta=\frac 12 -\epsilon_0$.
    
    In other words, for every $R>0$ and large enough $n$, every $\ceil{(1-R)n} \times n$ binary matrix has a $1$-free set of at most $(\frac 12 -\epsilon_0)n$ columns.
\end{conjecture}

It is conceivable that the same conclusion holds even for $R=o_n(1)$ and with a specified $o(1)$-term.

\begin{conjecture}\label{conj:little_oh}
    There exists some $\epsilon_0>0$ and a positive function $\eta(n)=o(n)$ such that for every large enough $n$, every $\ceil{n-\eta(n)} \times n$ binary matrix has a $1$-free set of at most $(\frac 12 -\epsilon_0)n$ columns.
\end{conjecture}

Let $A$ be a parity check matrix of a linear code $\mathcal{C}\subseteq \{0,1\}^m$.
Of course $\mathcal{C}$ remains invariant under elementary row operations on $A$.
Also, distances among vectors in $\mathcal{C}$ remain unchanged as $A$'s columns get permuted.
Consequently, in the study of $R_L(\delta)$ as in \cref{L_of_delta}, there is no loss of generality in assuming that $A$ is in \emph{standard form}, i.e., its first $n$ columns form an order-$n$ identity matrix.
We pose:

\begin{problem}\label{with_identity_minor}
    Let $n, k$ be positive integers, and let $A$ be a binary $n\times(n+k)$ matrix whose first $n$ columns form the order-$n$ identity matrix.
    How large can $u(A)$ be?
\end{problem}

Below, we often use the invariance of $u$ and $\e$ under row and column permutations.

\subsection{Unique-neighbor codes in other contexts}\label{sec:context}

The function $u$ was previously (e.g., \cite{di2002finite, kashyap2003stopping, orlitsky2005stopping, schwartz2006stopping}) defined under the name of the \emph{stopping distance} of a matrix.
This notion arises in the study of binary erasure channels (see, e.g., \cite{guruswami2006iterative}).
For further algorithmic aspects of stopping sets see \cite{jiang2010stopping, jiang2011stopping, luby2001efficient, price2017survey, rathi2006asymptotic}.

The closely related notion of \emph{unique-neighbor expansion} of codes appears in the study of message-passing algorithms and expander codes (see \cite{hoory2006expander} for a survey).
Such algorithms offer an approach to the \emph{decoding} of linear codes $\mathcal{C}=\{x\in\F_2^m\mid Ax=0\}$.
Here $A$ is viewed as the bipartite adjacency matrix $(U,V;E)$ of the code's \emph{factor graph} (or Tanner graph~\cite{tanner1981recursive}).
Here $U, V$ are the sets of $A$'s rows and columns, and edges correspond to $1$-entries in $A$.
Message-passing algorithms such as \emph{belief propagation}~\cite{gallager1962low} work by iteratively passing messages between vertices in $U$ and those in $V$.

When the factor graph is a bounded-degree expander graph, we say that $\mathcal{C}$ is an \emph{expander code}.
Such codes belong to the class of low density parity check (LDPC) codes introduced by Gallager \cite{gallager1962low}.
An important feature of such codes is that they can be efficiently decoded, using message-passing algorithms~\cite{gallager1962low, richardson2001capacity, zyablov1975estimation}.
In a highly influential paper \cite{sipser1996expander}, Sipser and Spielman showed that message-passing algorithms can efficiently decode expander codes even when linearly many (in $n$) errors occur.
The performance of the algorithm depends on the unique-neighbor expansion of the graph's bipartite adjacency matrix.
For more on this subject, see \cite{dowling2017fast, guruswami2006iterative, hoory2006expander, richardson2008modern, shokrollahi2004ldpc, viderman2013linear}.

Unique-neighbor expanders and unique-neighbor codes are still not sufficiently well understood.
Alon and Capalbo \cite{alon2002explicit} found explicit constructions of bipartite graphs which are $(\alpha, \beta)$-unique-neighbor expander graphs, where $\alpha,\beta$ are some positive absolute constants and $\frac{|V|}{|U|}$ is bounded away from $1$.
See also \cite{becker2016symmetric} and~\cite{ben2009tensor} for more.
In a recent paper~\cite{hsieh2024explicit}, explicit unique-neighbor expanders are constructed with $\alpha$ bounded away from~$0$ and $\beta\approx\frac{3}{5}$, which surpasses spectral methods that achieve $\beta\le\frac{1}{2}$.


\section{Our New Results}

Our work addresses Problems \ref{U_of_delta} and \ref{with_identity_minor}.
Problems \ref{R_of_delta} and \ref{L_of_delta} are mentioned here for context only.

\begin{enumerate}
    \item\label{pt1}
    \cref{thm:min-weight rows} shows that matrices $A$ in which all row sums are $9$ or more, satisfy \cref{main:conj} with $\epsilon_0=0.03$ even when $R=0$.
    \item\label{pt2}
    \cref{thm:log_gap} shows that the conclusion of \cref{conj:little_oh} fails when $\eta(n)<\log_2(n)$.
    \item
    \cref{thm:u_I bound} answers \cref{with_identity_minor}: 
    $u_I(n,n+k)=\frac{n}{H_k}\pm O_n(1)$.
    Here $H_k$ is the $k$-th harmonic sum.
    In particular we prove \cref{main:conj} for matrices in standard form.
    \item
    Clearly $u(m,n)\le\e(m,n)$ for all $m$ and $n$.
    \cref{thm:sec5 thm} shows that $u=\e$ when $n-m\le3$ and that $(m,n)=(4,8)$ is the first case where $u<\e$.
\end{enumerate}


\section{With Lower Bounds on Row Weights}

As we show next, matrices of sufficiently large row weights satisfy \cref{main:conj}.
In contrast, small $1$-free sets seem harder to find in sparse matrices.
The case of row weights $3$ may be the hardest.
The theorem below, as well as \cref{thm:log_gap} and \cref{thm:sec5 thm} exhibit matrices with row weights all either $3$ or $4$ for which the conclusion of \cref{main:conj} fails to hold.

\begin{theorem}\label{thm:min-weight rows}
    \begin{enumerate}
        \item\label{ge9}
        If every row in a binary $n\times n$ matrix $A$ has weight at least $9$, then $u(A)<0.47n$.
        Namely, $A$ must have an $n\times n'$ submatrix with $n'<0.47n$ in which no row has weight $1$.
        \item\label{le4}
        On the other hand, for every $n$ there exists a binary $n\times n$ matrix $A$ where every row has weight~$4$, such that $u(A)=\ceil{\frac{n}{2}}$.
        Namely, every $n\times n'$ submatrix of $A$ with no row of weight $1$ satisfies $n'\ge\frac{n}{2}$.
        \item\label{wt3}
        For every $n$ there also exists a binary $n\times n$ matrix $B$ where every row has weight $3$, such that ${u(B)=\ceil{\frac{2n}{3}}}$.
    \end{enumerate}
\end{theorem}

\begin{proof}
    \cref{ge9}:
    Let $Z^{(1)}$ be a random set of columns in $A$ that is obtained by picking every column independently with probability $\rho$.
    Let $B^{(1)}$ be the resulting $[n]\times Z^{(1)}$ submatrix of $A$, and let $X_0^{(1)}, X_1^{(1)}$ the set of rows in $B^{(1)}$ of weight zero, resp.\ one.
    For every row $i\in X_1^{(1)}$ add some column to $Z^{(1)}$ whose $i$-th entry is $1$, and let $Z^{(2)}\supseteq Z^{(1)}$ be the resulting, extended column set.
    We denote the $[n]\times Z^{(2)}$ submatrix of $A$ by $B^{(2)}$.
    Let $X_0^{(2)}, X_1^{(2)}$ be the set of rows in $B^{(2)}$ of weight $0$, resp.\ $1$.
    Note that $X_0^{(2)}, X_1^{(2)}\subseteq X_0^{(1)}$.
    We proceed in the same way with $X_0^{(2)}, X_1^{(2)}$ until no row in the submatrix has weight $1$, and let $Z=\cup_k Z^{(k)}$ be the column set of the resulting matrix.
    Note that
    \begin{align*}
        |Z|\le |Z^{(1)}|+|X_0^{(1)}|+|X_1^{(1)}|,
    \end{align*}
    because every column in $Z\setminus Z^{(1)}$ accounts for some row in $X_0^{(1)}\cup X_1^{(1)}$ that no other column in $Z\setminus Z^{(1)}$ does.
    Therefore
    \begin{align*}
        \mathbb{E}(|Z|)\le \mathbb{E}(|Z^{(1)}|+|X_0^{(1)}|+|X_1^{(1)}|).
    \end{align*}
    We have
    \begin{align*}
        \mathbb{E}(|Z^{(1)}|)=\rho n\,;~~~\mathbb{E}(|X_0^{(1)}|)\le n(1-\rho)^c+o(n)\,;~~~\mathbb{E}(|X_1^{(1)}|)\le nc\rho(1-\rho)^{c-1}+o(n),
    \end{align*}
    if the Hamming weight of every row in $A$ is at least $c$.
    The claim follows by observing that 
    \begin{align*}
        \mathbb{E}(|Z|)<0.47n \text{~for~} c=9,\,\rho=0.3757.
    \end{align*}
    
    \cref{le4}:
    Let $A$ be the $n\times n$ binary matrix whose rows are comprised of all $n$ cyclic rotations of the vector $1^40^{n-4}$.
    Given a vector $x\in\{0,1\}^n\setminus\{\mathbf{0}\}$ (here $\mathbf{0}$ is the all-zero vector of length~$n$), let the vector $Ax$ be defined by integer arithmetic.
    Clearly, $Ax\in\{0,1,2,3,4\}^n$.
    Multiply on the left by the all-$1$ vector to conclude that $\|Ax\|_1 = 4\|x\|_1$.
    Note next that if $(Ax)_i=0$, then $(Ax)_{i+1 \bmod n}$ is either $0$ or $1$.
    Therefore, if $Ax$ has no $1$ coordinates, then all its coordinates are at least $2$, so that $4\|x\|_1=\|Ax\|_1 \ge 2n$ and $\|x\|_1 \ge\frac{n}{2}$, so $u(A)\ge\frac{n}{2}$.
    A $1$-free set of size~$\ceil{\frac{n}{2}}$ is the set $\{1,3,\ldots,2\ceil{\frac{n}{2}}-1\}$.
    Let $x\in\{0,1\}^n$ be its indicator vector.
    Since every two neighboring columns in the selected set have distance at most $2$ (where the distance between $1$ and $n$ is $1$), then for every row $i$, $\|(Ax)_i\|_1\ge2$ as claimed.
    
    \cref{wt3}:
    The matrix $B$ is similar to $A$ but with row weights of $3$, and a similar argument yields $u(B)=\ceil{\frac{2n}{3}}$.
    To show that $u(B)\ge\frac{2n}{3}$, let $x\in\{0,1\}^n$ be the indicator vector of a $1$-free set of columns in $B$.
    As before, $3\|x\|_1=\|Bx\|_1\ge2n$, so that $\|x\|_1\ge\frac{2n}{3}$ as claimed.
    To show that $u(B)\le \ceil{\frac{2n}{3}}$, notice that the set $\{j\in[n]\mid j\not\equiv0\bmod3\}$ is $1$-free column set with $\ceil{\frac{2n}{3}}$ columns.
    The conclusion follows.
\end{proof}

\cref{le4} of \cref{thm:min-weight rows} reflects on the validity of \cref{main:conj}.
It shows that to guarantee the existence of small $1$-free sets of columns, we must consider matrices with more columns than rows.
This statement is made quantitative in \cref{thm:log_gap}.

We suspect that \cref{ge9} of \cref{thm:min-weight rows} remains valid even when all row weights are at least~$5$.
However, this seems to require a substantial new idea.


\section{Matrices in Standard Form}

We denote by $u_I(m,n)$ the maximum of $u(A)$ for a binary $m\times n$ matrix in standard form $A=[I_m | B]$, where $I_m$ stands for the identity $m\times m$ matrix.
Answering \cref{with_identity_minor}, we give an upper bound on $u_I(m,n)$ that is tight in infinitely many cases.

\begin{theorem}\label{thm:u_I bound}
    For every positive integer $k$ and $n\to\infty$, every binary $n\times(n+k)$ matrix of the form $A=[I_n | B]$ has a $1$-free set of at most $\frac{n}{H_k}+k$ columns where $H_k = \sum_{\ell=1}^k\frac{1}{\ell}$ is the $k$-th harmonic sum.
    For fixed $k$, the bound is tight, i.e., $\frac{n}{H_k}-\exp((1+o_k(1))k) \le u_I(n,n+k)$.
\end{theorem}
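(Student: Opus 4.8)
The plan is to handle the upper bound and the matching construction separately, since the two directions use quite different ideas. For the upper bound: given $A = [I_n \mid B]$, I want to exhibit a small $1$-free set of columns. The key observation is that the identity block $I_n$ means each of the first $n$ columns $e_1,\dots,e_n$ has exactly one nonzero entry, in its own row. So a set $S$ of columns is $1$-free precisely when, in each row $i$, the number of columns of $S$ supported on row $i$ is not equal to $1$; the columns $e_i$ from the identity part contribute to row $i$ only. The natural strategy is to choose a subset $T$ of the $k$ columns of $B$ together with a subset of the identity columns, and to pick the identity columns greedily to ``repair'' every row where the columns of $T$ sum (over $\Z$) to exactly $1$. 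A row is \emph{bad} for $T$ if $\sum_{c\in T} c_i = 1$; to fix row $i$ we simply throw in $e_i$, raising the count to $2$. Thus $T \cup \{e_i : i \text{ bad for } T\}$ is $1$-free of size $|T| + (\text{number of bad rows})$, and we want to choose $T$ (and possibly allow a few identity columns inside $T$'s ``free choice'') so that few rows are bad. Averaging over a random $T$ of a fixed size $t \le k$, the expected number of bad rows is at most $n \cdot \Pr[\text{a fixed row has exactly one } 1 \text{ among } t \text{ random columns of } B]$; optimizing $t$ against this tail probability should yield the $n/H_k + O(k)$ bound, with the harmonic sum emerging from summing $1/\ell$ over the stages of a greedy/peeling argument rather than a single random choice — I would iterate: repeatedly pick the column of $B$ hitting the largest number of still-``active'' rows, which by a counting argument kills a $1/\ell$-fraction at stage $\ell$, so after $k$ stages the number of surviving rows is $n \prod (1 - 1/\ell)$-ish; careful bookkeeping of the residual rows that must be patched with identity columns gives the additive $O(k)$.

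For the matching lower bound (tightness), I need an $n\times(n+k)$ matrix $A=[I_n\mid B]$ whose stopping distance is $\ge n/H_k - O(k)$, i.e.\ every $1$-free set is large. Here the freedom is entirely in the $k$-column block $B$, and the constraint is that for \emph{every} nonempty $S$ of columns of $A$ that is $1$-free, $|S|$ is large. Since any $1$-free set must either use an identity column $e_i$ (then it needs a second column hitting row $i$, so it uses a column of $B$ covering $i$) or consist only of identity columns — but a single $e_i$ is never $1$-free and two distinct $e_i, e_j$ together give two rows of weight $1$, so every $1$-free set must contain at least one column of $B$. The extremal $B$ should be a ``covering design'': choose the $2^k - 1$ possible nonzero column patterns and repeat each pattern $p$ roughly $n/(|p|\cdot\text{stuff})$ times — more precisely, I expect the optimal $B$ to have columns spread so that no small subcollection $T$ of $B$-columns leaves few rows ``uniquely hit''. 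Concretely I would take, for each nonempty subset $J\subseteq[k]$, about $n \cdot (\text{weight-dependent constant})$ rows whose $B$-part is the indicator of $J$, chosen so that to make even one row non-bad you are forced to commit many identity columns; summing the per-pattern costs reproduces $H_k$. I would verify the lower bound by a direct argument: take any $1$-free $S$, let $T = S\cap(\text{columns of }B)$ with $|T| = t$; show that the number of bad rows (rows where $T$ sums to $1$) is at least $n/H_t - O(k) \ge n/H_k - O(k)$ by the design property, and each such row forces an identity column into $S$, so $|S| \ge t + n/H_k - O(k) \ge n/H_k - O(k)$.

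The main obstacle I anticipate is pinning down the constant exactly — getting $H_k$ rather than merely $\Theta(\log k)$ in both directions. On the upper side this means the greedy peeling must be analyzed tightly: at step $\ell$ the best available $B$-column must hit a $\ge 1/\ell$ fraction of the active rows, which requires an averaging argument over the $k - \ell + 1$ remaining columns plus an argument that ``active'' rows are hit at least once by the remaining columns on average (otherwise a row that no remaining column touches is automatically fine and should be removed from the count). On the lower side, the obstacle is designing $B$ so that this greedy bound is essentially met with equality and no clever non-greedy choice of $T$ does better — I expect this forces a fairly rigid repetition structure on the columns of $B$ (each of the $2^k-1$ patterns used, with multiplicities tuned to the harmonic weights), and checking optimality over \emph{all} subsets $T$, not just nested chains, is where the real work lies. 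The $O(k)$ slack should absorb rounding of these multiplicities and the boundary rows; I would not expect to, and would not try to, determine the $O(k)$ term exactly.
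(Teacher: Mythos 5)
Your reduction of the problem is the same as the paper's: a $1$-free set is a subset $T$ of $B$'s columns together with the identity columns $e_i$ for every row $i$ that $T$ hits exactly once, so everything comes down to showing that some nonempty $T\subseteq[k]$ leaves at most $n/H_k$ ``bad'' rows, and to building a $B$ for which no $T$ does better. However, your upper-bound mechanism is wrong, not merely incomplete. First, $n\prod_{\ell}(1-1/\ell)$ is not $n/H_k$: the product from $\ell=2$ to $k$ equals $n/k$, which is \emph{smaller} than $n/H_k$; since the theorem is tight, a greedy that ``kills a $1/\ell$-fraction at stage $\ell$'' cannot exist. Indeed, the extremal matrix refutes it directly: take $c_u$ (the number of rows of $B$ equal to $u$) proportional to $1/\binom{k-1}{|u|-1}$ for $u\neq\mathbf{0}$; one computes that \emph{every} nonempty $T$, of every size, leaves exactly $n/H_k$ bad rows, so greedy makes no progress whatsoever from one stage to the next. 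Second, your accounting treats the process as monotone ``covering,'' but adding a column to $T$ simultaneously repairs some weight-$1$ rows and creates new ones (rows previously hit zero times); you flag this but offer no way to control it, and it is precisely why a stage-by-stage greedy analysis does not close.

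The paper avoids greedy entirely: it encodes the max-min as a linear program in the multiplicities $c_u$ (maximize $y$ subject to $Mz\ge\mathbf{1}y$, $\langle z,\mathbf{1}\rangle=n$, where $M_{u,v}=[\langle u,v\rangle=1]$) and exhibits the explicit dual-feasible vector $w_u=1/\binom{k-1}{|u|-1}$, $w_{\mathbf{0}}=0$, satisfying $w^TM=k(\mathbf{1}-e_{\mathbf{0}})$ and $\langle w,\mathbf{1}\rangle=kH_k$. Probabilistically: choose a random nonempty $T$ by first picking its size $j$ with probability proportional to $1/j$ and then picking $T$ uniformly among $j$-subsets; then every nonzero row is bad with probability exactly $1/H_k$, so the expected number of bad rows is at most $n/H_k$ and some $T$ achieves it. This is where your intuition that ``$H_k$ comes from summing $1/\ell$ over stages'' should have pointed --- to harmonic weights on the \emph{levels of a random antichain-free average}, not to a peeling recursion. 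The same vector, rescaled to $z=nw/(kH_k)$, is the primal optimum and gives your anticipated construction with the multiplicities made explicit; the verification you defer (``optimality over all subsets $T$, not just chains'') is exactly the identity $Mz=\mathbf{1}\cdot n/H_k$, i.e., the binomial computation above. As it stands, your upper bound does not go through and your lower bound is a correct guess of the shape of the answer without the computation that certifies it.
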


\begin{proof}
    The vector $z\in\{0,1\}^{n+k}$ is the characteristic vector of a $1$-free set of columns in $A$ if and only if the vector $Az$ (computed over the reals) has no $1$-coordinates.
    It is convenient to express $z$ as the concatenation of $x\in\{0,1\}^{n}$ and $y\in\{0,1\}^{k}$.
    Clearly, $Az=x+By$, and $Az$ has no $1$-coordinates if and only if $x$ is the characteristic vector of a set that contains every $1$-coordinate in $By$, and none of its $0$-coordinates.
    Therefore, the smallest size of a $1$-free set of columns in $A$ is
    \begin{equation}\label{eq:stan_form}
        u(A)=\min\left\{ \|y\|_1 + \text{~the number of~}1\text{-coordinates in~}By\right\} \text{~over all~} \mathbf{0}\neq y\in\{0,1\}^{k}.
    \end{equation}
    
    In this view, the order at which $B$'s rows appear is immaterial, and all we care is, given $v\in\{0,1\}^k$, how many rows in $B$ equal $v$.
    We denote this number by $c_v$.
    If a row of $B$ equals $v$, then the corresponding coordinate in $By$ equals $\langle v, y\rangle$, where $\langle \cdot, \cdot\rangle$ stands for inner product over the reals.
    Therefore, we can rewrite Equation (\ref{eq:stan_form}) as
    \begin{equation}\label{eq:stan_form_2}
        u(A)=\min\left\{ \|y\|_1 + \sum_{v,\,\langle v,y\rangle=1}c_v\right\}\text{~over all~} \mathbf{0}\neq y\in\{0,1\}^{k}.
    \end{equation}
    
    It transpires that the theorem seeks the maximum of the expression in Equation (\ref{eq:stan_form_2}), and this over the choice of the matrix $B$.
    However, as we saw, all we need to know about $B$ are the nonnegative integers~$c_v$, where clearly $\sum_{v\in\{0,1\}^k}c_v=n$.
    This means that the answer to our problem can be expressed as the optimal value of an integer linear program.
    
    To simplify matters, we neglect the term $\|y\|_1$ in Equation~(\ref{eq:stan_form_2}).
    This causes only a minor loss, since $1\le\|y\|_1\le k$, while $n\to\infty$.
    Furthermore, without loss of generality, to maximize the expression in Equation (\ref{eq:stan_form_2}) we can set $c_{\mathbf{0}}=0$.
    Having made these simplifications, we seek the smallest integer $m=m(k,n)$ for which the following statement holds true:
    \begin{align*}
        \begin{array}{c}
            \text{Given~any~} 2^k-1 \text{~integers~} c_v\ge 0 \text{~indexed by~}v\in\{0,1\}^k\setminus\{\mathbf{0}\} \text{~with~} \sum_{\mathbf{0}\ne v\in\{0,1\}^k}c_v=n,\\[1em]
            \text{~there is some~}\mathbf{0}\neq y\in\{0,1\}^{k} \text{~with~} \sum_{v,\langle v,y\rangle=1}c_v\le m.
        \end{array}
    \end{align*}
     
    Let $M$ be the $(2^k-1)\times(2^k-1)$ binary matrix that is indexed by $\{0,1\}^k\setminus\{\mathbf{0}\}$, whose $(u,v)$ entry equals $1$ if and only if $\langle u,v\rangle=1$.
    The above statement can be restated as follows:
    \begin{align*}
        \begin{array}{c}
            \forall\text{\,integer vector~} c\ge \mathbf{0} \text{~with~} \sum_{\mathbf{0}\ne w\in\{0,1\}^k}c_w=n, 
            \text{~some coordinate in~}Mc\text{~corresponding to a}\\[1em]
            \mathbf{0}\ne y\in\{0,1\}^k\text{~is~}\le m.
        \end{array}
    \end{align*}
    
    In other words, $m=m(k,n)$ is the largest integer $t$ for which there is some $c$ as above such that
    \begin{align*}
        M c\ge \mathbf{1}\cdot t.
    \end{align*}
    It follows that
    \begin{align*}
        \Phi \le u_I(n,n+k)\le \Phi+k,
    \end{align*}
    (the $+k$ slack reflects the neglection of the term $\|y\|_1$) where
    \begin{align*}
        \Phi=\,&\max t, \\
        \text{subject to }~~ & M c\ge \mathbf{1}\cdot t,\numberthis\label{ineq-ilp}\\
        \langle c,\mathbf{1}\rangle=n &\text{ and }  c\ge \mathbf{0} \text{ is a vector of integers},
    \end{align*}
    where $\mathbf{1}\in\{0,1\}^{2^k}$ is the all-$1$ vector.
    
    We turn to solve the rational relaxation of the above ILP.
    \begin{align*}
        &\max t,\\
        \text{subject to }~~ & M c\ge \mathbf{1}\cdot t,\numberthis\label{ineq}\\
        & \langle c,\mathbf{1}\rangle=n \text{ and }  c\ge \mathbf{0}.
    \end{align*}
    
    Clearly, this optimum is an upper bound on $\Phi$.
    Using the idea of LP duality, we left-multiply Inequality~(\ref{ineq}) by the following $2^k$-dimensional vector $w$:
    \begin{align*}
        w_v=\frac{1}{\binom{k-1}{|v|-1}}\text{~for every~}{v\neq\bf0} \text{~in~}\{0,1\}^k.
    \end{align*}
    Clearly, if $v\in\{0,1\}^k$ with $|v|=j$ for some $1\le j\le k$ then
    \begin{align*}
        (w^TM)_v=\sum_{i=1}^k \frac{1}{\binom{k-1}{i-1}}j\binom{k-j}{i-1}=\frac{j!(k-j)!}{(k-1)!}\sum_{i=1}^k \binom{k-i}{j-1}=\frac{j!(k-j)!}{(k-1)!}\binom{k}{j}=k.\numberthis\label{long_calc}
    \end{align*}
    The first equality follows from the definition.
    The second only involves reorganizing terms.
    The third one uses the standard and easy fact that for all positive integers $s \le N$ it holds that
    \begin{align*}
        \sum_{s\le r \le N}\binom{r}{s}=\binom{N+1}{s+1}.
    \end{align*}
    As we left-multiply Inequality (\ref{ineq}) by $w$, Equation~(\ref{long_calc}) yields $w^TM=\mathbf{1}\cdot k$.
    
    Also
    \begin{align*}
        \langle w,\mathbf{1}\rangle = \sum_{i=1}^k\frac{\binom{k}{i}}{\binom{k-1}{i-1}}=kH_k.
    \end{align*}
    Therefore,
    \begin{align*}
        kH_kt=w^T\mathbf{1}\cdot t\le w^TMc\le k\cdot\mathbf{1}^Tc\le kn~\Rightarrow~t\le\frac{n}{H_k}.
    \end{align*}
    The optimum of the LP is an upper bound on the optimum of the ILP.
    Therefore
    \begin{align*}
        u_I(n,n+k)\le\frac{n}{H_k}+k.
    \end{align*}
    
    For the lower bound in the theorem we need a lower bound on the ILP (\ref{ineq-ilp}).
    To this end we define
    \begin{align*}
        z\coloneqq n\frac{w}{kH_k},
    \end{align*}
    and observe that the above calculations yield $\langle z,\mathbf{1}\rangle=n$, and $M z=\mathbf{1}\frac{n}{H_k}$.
    So $\frac{n}{H_k}$ is a lower bound on the LP~(\ref{ineq}).
    We now use this to lower bound the ILP.
    
    Let $H_k\coloneqq\frac{a_k}{b_k}$ written as a reduced rational.
    If $n$ is divisible by $a_k$, then $u_I(n,n+k)=\frac{n}{H_k}+k'$, for some $0\le k'\le k$, because in this case the optimal solutions to our LP and the ILP coincide.
    In the general case, say $n=n_1a_k+n_2$ where $0\le n_2<a_k$, then by the structure of our matrices,
    \begin{align*}
        u_I(n,n+k)\ge u_I(n,n+k)-u_I(n_2,n_2+k)\ge n_1b_k-(n_2+1)\ge n_1b_k-a_k.
    \end{align*}
    
    There holds
    \begin{align*}
        a_k=H_kb_k\le H_k\lcm(1,\ldots,k)&=(\ln k + \gamma +O_k(1/k))\exp((1+o_k(1))k)\\
        &=\exp((1+o_k(1))k),
    \end{align*}
    where we employed the following two well known facts:
    \begin{itemize}
    \item 
    $H_k=\frac{a_k}{b_k}=\ln k + \gamma +O_k(1/k)$, where $\gamma$ is Euler's constant.
    \item 
    $b_k=\lcm(1,\ldots,k)=\exp((1+o_k(1))k)$, where $\ln(\lcm(1,\ldots,k))$ is Chebyshev's second function.
    \end{itemize}
    
    The lower bound on $u_I(n,n+k)$ now follows.
\end{proof}

This implies \cref{main:conj} for these matrices:

\begin{corollary}
    For every $n\ge400$ and $k\ge4$ there holds $u_I(n,n+k)\le0.49n$.
\end{corollary}

\begin{proof}
    When $k\ge4$, $u_I(n,n+k)\le u_I(n,n+4)\le\frac{n}{H_4}+4=0.48n+4\le0.49n$ for all $n\ge400$.
\end{proof}


\section{Some Useful Constructions}\label{sec:construct}

In what follows we make occasional comments that pertain to some well-known families of codes, that the reader may find interesting.
However, no familiarity with coding theory is assumed.
All the relevant background information may be found in standard texts such as \cite{richardson2008modern} and \cite{van1998introduction}.

In this section we introduce several constructions of binary matrices.
The building blocks of these constructions are binary $(2^k-1-k)\times(2^k-1)$ matrices called $U_k$ that we define for $k=2,3,\ldots$.
Further building blocks are $U_{k,m}$, which are $((2^k-1)m-k)\times((2^k-1)m)$ binary matrices.
We define $U_k$ both recursively and directly.
It is easy to verify by a simple inductive argument that the two definitions coincide.
Here is the recursive definition:
\begin{align}
    U_2&=\begin{pmatrix}
    1&1&1
    \end{pmatrix},
\end{align}
\begin{equation}\label{recurse:u}
    U_{k+1}=\begin{pmatrix}
    I_{2^k-1}&\mathbf{1}_{2^k-1\times1}&I_{2^k-1}\\
    \mathbf{0}&\mathbf{0}&U_k
    \end{pmatrix},    
\end{equation}
where $\mathbf{1}_{p\times q}$ is the all-$1$ matrix of dimensions $p\times q$.

In the direct definition of $U_k$ we index its columns by all integers $2^k-1, 2^k-2,\ldots,1$, in this order.
The rows are indexed by the subsequence of the above excluding the powers of $2$.
Each row of $U_k$ has weight $3$.
If the integer $m\in\{1,\ldots,2^k-1\}$ is not a power of $2$, such that $2^t<m<2^{t+1}$ for an integer~$t$, then the three $1$ entries in row $m$ appear in columns $m, m-2^t$ and $2^t$.

For example,
\begin{align*}
    U_3=\begin{pmatrix}
    1&0&0&1&1&0&0\\
    0&1&0&1&0&1&0\\
    0&0&1&1&0&0&1\\
    0&0&0&0&1&1&1
    \end{pmatrix},
\end{align*}
with rows called $7,6,5,3$ in this order and columns called $7,\ldots,1$.

We note that $U_k$ is, by definition, a generator matrix of the $[2^k-1,2^k-1-k,3]_2$ Hamming code, and a parity check matrix of the corresponding simplex code (shortened Hadamard code).
Its rows are linearly independent, since it has an upper-triangular square submatrix with $1$'s on its diagonal.
This matrix, called $T_k$, is attained by erasing those columns of $U_k$ that correspond to powers of two.
The remaining submatrix is called $R_k$, namely the columns of $U_k$ whose index is a power of $2$.
For example,
\begin{align*}
    T_3=\begin{pmatrix}
    1&0&0&1\\
    0&1&0&0\\
    0&0&1&0\\
    0&0&0&1
    \end{pmatrix},~
    R_3=\begin{pmatrix}
    1&0&0\\
    1&1&0\\
    1&0&1\\
    0&1&1
    \end{pmatrix}.
\end{align*}

An interesting note is that this construction exploits the duality between Hamming codes and simplex codes.
A Hamming code is generated by a matrix with row weights $3$, so it is relatively easy to derive a lower bound $b$ on $u(U_k)$.
On the other hand, all non-zero codewords in the order-$N$ simplex code have weight $\frac{N+1}{2}$.
So for $N=2^k-1$ this duality yields an upper bound of $2^{k-1}$ on $\e(U_k)$.
The resulting inequality reads $b\le u(U_k)\le\e(U_k)\le2^{k-1}$, and when $b=2^{k-1}$ holds true, they are both equalities.

We next construct the following $((2^k-1)m-k)\times((2^k-1)m)$ binary matrix for every $k\ge 2, m\ge 1$:
\begin{align}\label{eq:ukm}
    U_{k,m}=\left(\begin{array}{ccc|c|c}
        T_k&&&&R_k\\
        &\ddots&&&\vdots\\
        &&T_k&&R_k\\\hline
        &&&I_{(m-1)k}&\begin{array}{c}I_k\\\hline\vdots\\\hline I_k\end{array}
    \end{array}\right).
\end{align}
The numbers of $T_k$ and $R_k$ blocks are $m$, and empty blocks are all-zero blocks.


\section{\texorpdfstring{\cref{conj:little_oh}}{Conjecture 1.8} Fails for Sub-Logarithmic \texorpdfstring{$\eta$}{eta}}

We prove next that $u(n-\log_2n-1,n-1)=\e(n-\log_2n-1,n-1)=\frac n2$ for infinitely many integers $n$.
Concretely,

\begin{theorem}\label{thm:log_gap}
    For every integer $k\ge 2$ it holds that 
    \begin{align*}
        u(2^k-1-k,2^k-1)=\e(2^k-1-k,2^k-1)=2^{k-1}.
    \end{align*}
\end{theorem}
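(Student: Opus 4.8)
The plan is to establish matching upper and lower bounds on $u$ and $\varepsilon$ for the relevant dimensions, squeezing everything to $2^{k-1}$. Since $u(A)\le\varepsilon(A)$ always, it suffices to prove (a) $\varepsilon(2^k-1-k,2^k-1)\le 2^{k-1}$, i.e. every $(2^k-1-k)\times(2^k-1)$ binary matrix has a nonempty even set of at most $2^{k-1}$ columns, and (b) $u(U_k)\ge 2^{k-1}$, where $U_k$ is the $(2^k-1-k)\times(2^k-1)$ matrix built in \cref{sec:construct}; then $2^{k-1}\le u(U_k)\le\varepsilon(U_k)\le\varepsilon(2^k-1-k,2^k-1)\le 2^{k-1}$ forces equality throughout, and in particular $u(2^k-1-k,2^k-1)=\varepsilon(2^k-1-k,2^k-1)=2^{k-1}$.

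For the upper bound (a), I would argue as follows. Let $A$ be any $(2^k-1-k)\times(2^k-1)$ binary matrix. A nonempty even set of columns is exactly a nonzero vector $x$ in $\ker A$ over $\F_2$ of weight $\le 2^{k-1}$; so I want to show $\ker A$ contains a nonzero vector of weight at most $2^{k-1}$. The kernel has dimension at least $(2^k-1)-(2^k-1-k)=k$, so it contains a nonzero subspace $W\subseteq\F_2^{2^k-1}$ of dimension exactly $k$. The key observation is the duality hinted at in \cref{sec:construct}: a $k$-dimensional subspace of $\F_2^{2^k-1}$ is (a permutation of coordinates away from) either the simplex code of length $2^k-1$, or—if $W$ has a coordinate on which every codeword vanishes—a shorter object, which only helps. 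In the simplex code every nonzero codeword has weight exactly $2^{k-1}$, so $W$ contains a vector of weight $\le 2^{k-1}$. To make this precise without invoking a classification, I would instead use a direct counting/averaging argument: sum the weights of all $2^k-1$ nonzero vectors of $W$. Each coordinate $i\in[2^k-1]$ contributes to this sum the number of $w\in W$ with $w_i=1$, which is either $0$ (if $i$ is a "zero coordinate" of $W$) or exactly $2^{k-1}$ (since $\{w\in W: w_i=1\}$ is a coset of the index-$\le 2$ subgroup $\{w: w_i=0\}$, and it is nonempty, so it is precisely half of $W$). Hence $\sum_{0\ne w\in W}\|w\|_1\le (2^k-1)\cdot 2^{k-1}$, so the average weight of a nonzero codeword is at most $2^{k-1}$, and some nonzero $w\in W$ has weight $\le 2^{k-1}$. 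That $w$ is the desired even set.

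For the lower bound (b), I must show that $U_k$ has no nonempty $1$-free set of fewer than $2^{k-1}$ columns. Since $U_k$ is a generator matrix of the $[2^k-1,\,2^k-1-k,\,3]_2$ Hamming code, its rows span the Hamming code; equivalently, $U_k^T$ is a parity-check matrix of the dual simplex code, but the cleaner route is: a set $S$ of columns of $U_k$ is $1$-free iff, letting $z\in\Z^{2^k-1-k}$ be the integer column-sum over $S$, no entry of $z$ equals $1$. Consider the parity vector $\bar z=z\bmod 2\in\F_2^{2^k-1-k}$: if $\bar z=0$ then $S$ is an even set, hence (over $\F_2$) a codeword of the row space of $U_k^T$; if $\bar z\ne 0$ then the rows indexed by $\mathrm{supp}(\bar z)$ have odd $S$-intersection, and $1$-freeness forces each such intersection to be $\ge 3$, a strong parity-plus-cardinality constraint. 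The plan is to reduce to the first case using the structure of $U_k$: every row of $U_k$ has weight exactly $3$, so a single row's intersection with $S$ is $|S\cap\mathrm{supp}(\text{row})|\in\{0,1,2,3\}$, and $1$-freeness rules out the value $1$ for every row. I would then argue that $S$ being $1$-free forces $|S|\ge 2^{k-1}$ by relating $S$ to codewords of the simplex code (the dual of the Hamming code): the incidence structure of $U_k$ is exactly that of the points versus the weight-$3$ lines / the columns of a generating matrix of the simplex code, so a $1$-free set must "cover" the rows in a way that, dualized, produces a nonzero simplex codeword supported inside $[2^k-1]\setminus S$ or inside $S$; since every nonzero simplex codeword has weight $2^{k-1}$, this yields $|S|\ge 2^{k-1}$ (or the complementary bound, which is the same since $2^{k-1}\approx \tfrac12(2^k-1)$, with the $O(1)$ slack absorbed—though here the theorem claims the exact value $2^{k-1}$, so I must be careful that the bound comes out exactly, presumably by a clean induction using the recursion \eqref{recurse:u}).

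The main obstacle I anticipate is the exact lower bound $u(U_k)\ge 2^{k-1}$ with no slack. The averaging argument for (a) is robust, but showing $U_k$ actually attains the bound on the nose likely needs either the explicit simplex-code duality ("$U_k$ generates the Hamming code, whose dual simplex code has constant weight $2^{k-1}$, and a minimal $1$-free set of $U_k$ corresponds exactly to the complement of the support of a minimum-weight simplex codeword") or a careful induction on $k$ via \eqref{recurse:u}: assuming $u(U_k)\ge 2^{k-1}$, analyze a putative small $1$-free set $S$ in $U_{k+1}$ using the block structure $\begin{pmatrix}I&\mathbf 1&I\\\mathbf 0&\mathbf 0&U_k\end{pmatrix}$—the first block of rows is an identity flanked by a copy of itself plus an all-ones column, which tightly constrains how $S$ meets the first $2^{k-1}$-ish columns, and the bottom-right block reduces the tail of $S$ to a $1$-free set for $U_k$. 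Getting the arithmetic of this recursion to yield precisely $2^{k}$ for $U_{k+1}$ (rather than $2^k - O(1)$) is the delicate point, and is where I would spend the most effort.
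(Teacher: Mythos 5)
Your overall squeeze $2^{k-1}\le u(U_k)\le\varepsilon(U_k)\le\varepsilon(2^k-1-k,2^k-1)\le 2^{k-1}$ is exactly the right frame, and your part (a) is correct: the averaging argument over a $k$-dimensional subspace $W$ of $\ker A$ (each coordinate functional is either identically zero on $W$ or vanishes on exactly half of $W$, so the total weight of the $2^k-1$ nonzero vectors is at most $(2^k-1)2^{k-1}$) is precisely the elementary linear-algebra proof that the paper mentions as an alternative to its invocation of Griesmer's bound. That half of your proposal is complete and, if anything, more self-contained than the paper's.

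The genuine gap is in part (b): you do not actually prove $u(U_k)\ge 2^{k-1}$. Your first route, via simplex duality, does not work as stated: the duality tells you that the \emph{even} sets of columns of $U_k$ are exactly the supports of nonzero simplex codewords (which gives $\varepsilon(U_k)=2^{k-1}$, i.e.\ an \emph{upper} bound on $u(U_k)$), but a $1$-free set need not be even, so a minimal $1$-free set need not correspond to any simplex codeword, and no lower bound on $|S|$ follows. Your second route --- induction on $k$ via the recursion \eqref{recurse:u} --- is the correct one, but you stop exactly at the step you flag as delicate. To close it: write a $1$-free set $S$ of $U_{k+1}$ as $L\sqcup M\sqcup R$ where $M$ is the middle column $2^k$. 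If $M$ is in $S$, each of the top $2^k-1$ rows already meets $S$ once, so each needs a second column from its (disjoint) pair of remaining $1$-positions, giving $|S|\ge 2^k$. If $M$ is not in $S$, the bottom rows force $R$ to be a $1$-free set of $U_k$, and $R\ne\emptyset$ (else the top identity block kills $L$), so $|R|\ge 2^{k-1}$ by induction; then for each column of $R$, the corresponding top row has exactly one other possible $1$ in $S$, namely the matching column of the left identity block, which must therefore lie in $L$. Hence $|L|\ge|R|$ and $|S|\ge 2^k$ exactly, with no $O(1)$ loss. Without this (or an equivalent) argument, the lower bound, and hence the theorem, is unproved.
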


\begin{proof}
    We proceed by showing that
    \begin{align*}
        \e(2^k-1-k,2^k-1)\le 2^{k-1}\text{~~and~~}u(U_k)\ge 2^{k-1}.
    \end{align*}
    
    We start with the upper bound on $\e$:
    
    \begin{proposition}\label{prop:parity}
        Every $n\times(n+k)$ binary matrix $A$ has an even set of at most $\left(1+\frac{1}{2^k-1}\right)\frac{n+k}{2}$ columns.
        
        In particular, $\e(2^k-1-k,2^k-1)\le 2^{k-1}$.
    \end{proposition}
    \begin{proof}
        We provide two proofs, one linear algebraic and one that appeals to known bounds in coding theory.
        Consider the linear code $\mathcal{C}=\{x\in\F_2^{n+k}\mid Ax=0\}$.
        As usual, we may and do assume that $A$ has rank $n$, so that $|\mathcal{C}|=2^k$.
        Clearly, $\mathcal{C}$ is invariant under addition of any given vector $x\in\mathcal{C}$.
        Consequently, for any $1\le i\le n+k$ either the $i$-th coordinate is identically zero in $\mathcal{C}$ or there are exactly $2^{k-1}$ vectors in $\mathcal{C}$ whose $i$-th coordinate is zero, resp.\ one.
        Consequently, the average Hamming weight of vectors in $\mathcal{C}$ is at most $\frac{n+k}{2}$.
        It follows that the average weight of non-zero codewords in $\mathcal{C}$ is at most $\frac{2^{k-1}(n+k)}{2^k-1}=\left(1+\frac{1}{2^k-1}\right)\frac{n+k}{2}$.
        The conclusion follows.
        
        Alternatively we can appeal to Griesmer's bound \cite{griesmer1960bound} which says that the length of a $k$-dimensional binary linear code of minimum distance $d$ is at least $\sum_{i=0}^{k-1}\ceil{\frac{d}{2^i}}$.
        As we remove ceilings we conclude that this expression is at least $d(2-\frac{1}{2^{k-1}})$.
        After simplifying terms we arrive at $\frac{2^k-1}{2^{k-1}}\e(A)\le n+k$, as claimed.
        
        The second statement follows by letting $n=2^k-1-k$.
    \end{proof}

    We turn to prove that $u(U_k)\ge 2^{k-1}$.
    The proof proceeds by induction on $k\ge2$.
    For $k=2$ the claim clearly holds.
    For the induction step we use the recursive description of $U_{k+1}$ in \cref{sec:construct}.
    Consider a $1$-free set of $U_{k+1}$.
    If it contains column $2^k$, then it must include at least $2^k-1$ additional columns (from either side of the column), for a total of at least $2^k$ columns, as claimed.
    
    Thus it suffices to consider a $1$-free set of columns of the form $L\sqcup R$, where $R,L$ are the subsets of columns from the $2^k-1$ rightmost, leftmost ones, respectively.
    Using the recursive definition in \cref{recurse:u}, we have $|R|\ge 2^{k-1}$ by the induction hypothesis.
    Furthermore, for every $r\in R$, one of its first (upper) $2^k-1$ coordinates contains a $1$.
    Since column $2^k$ is absent from $L\sqcup R$, the (unique) matching column from the $2^k-1$ leftmost columns should be included in $L$, in order for $L\sqcup R$ to be a $1$-free set.
    It follows that $|L\sqcup R|\ge2^k$, completing the proof.
\end{proof}

Notice that the bound in \cref{prop:parity} holds with equality for the parity check matrices of simplex codes.
In this case, the weight of all non-zero codewords coincides with the upper bound in \cref{prop:parity}.

The following extension of \cref{thm:log_gap} yields further cases where $\e$ and $u$ coincide.

\begin{theorem}\label{thm:log-gap generalized}
    For every integers $k\ge2$ and $m\ge1$ it holds that
    \begin{align*}
        u((2^k-1)m-k,(2^k-1)m)=\e((2^k-1)m-k,(2^k-1)m)=2^{k-1}m.
    \end{align*}
\end{theorem}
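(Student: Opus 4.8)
The plan is to mimic the two-sided strategy of \cref{thm:log_gap}: bound $\varepsilon$ from above by a counting/parity argument and bound $u(U_{k,m})$ from below by exploiting the structure of the $m$-fold construction $U_{k,m}$, then observe that the two bounds meet. Since $u(A)\le\varepsilon(A)$ always holds, and since $U_{k,m}$ is a $((2^k-1)m-k)\times((2^k-1)m)$ matrix, it suffices to prove
\[
\varepsilon((2^k-1)m-k,(2^k-1)m)\le 2^{k-1}m\quad\text{and}\quad u(U_{k,m})\ge 2^{k-1}m.
\]

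For the upper bound on $\varepsilon$, I would apply \cref{prop:parity} with $n=(2^k-1)m-k$ columns playing the role of the ``$n$'' there: every $((2^k-1)m-k)\times((2^k-1)m)$ binary matrix has an even set of at most $\left(1+\frac{1}{2^{k}-1}\right)\frac{(2^k-1)m}{2}=\frac{(2^k-1+1)m}{2}=2^{k-1}m$ columns. (One must be a little careful that the $+k$ slack in \cref{prop:parity} is absorbed correctly, since here the column excess over the row count is exactly $k$; the kernel still has dimension $\le k$, so Griesmer gives the same bound, and the resulting even set has size at most $2^{k-1}m$ exactly, no $O(k)$ error.) This immediately yields $\varepsilon(U_{k,m})\le 2^{k-1}m$.

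For the lower bound $u(U_{k,m})\ge 2^{k-1}m$, I would argue directly from the block structure of $U_{k,m}$. Recall $U_{k,m}$ consists of $m$ diagonal copies of $T_k$ in its first $(2^k-1-k)m$ columns, then an identity block $I_{(m-1)k}$, and a final column-block of width $k$ holding $m$ stacked copies of the columns of $R_k$ (the rows of $R_k$ aligned with the $T_k$-blocks, and $I_k$'s aligned with the $I_{(m-1)k}$ rows). The key point is that $[T_k\mid R_k]$ is just a column permutation of $U_k$, so each ``horizontal slab'' of $U_{k,m}$ looks like a copy of $U_k$ sharing its last $k$ columns with all the other slabs. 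Given a $1$-free column set $S$ in $U_{k,m}$, let $S_0$ be the columns of $S$ among the last $k$ (the shared $R_k/I_k$-block) and let $S_j$ ($1\le j\le m$) be the columns of $S$ in the $T_k$-block of slab $j$. Within slab $j$, the set $S_j$ together with $S_0$ must be $1$-free with respect to the rows of that slab; but those rows, restricted to the columns $S_j\cup S_0$, form (a submatrix of) $U_k$. If $S_0=\emptyset$ then by \cref{thm:log_gap} applied to slab $j$ we get $|S_j|\ge 2^{k-1}$, hence $|S|=\sum_j|S_j|\ge 2^{k-1}m$. The remaining work is the case $S_0\neq\emptyset$: here one should argue, slab by slab, that each unit column of $R_k$ that is ``hit oddly'' by $S_0$ forces an additional column of the $T_k$-block into $S_j$ (exactly as in the induction step of \cref{thm:log_gap}, where absence of column $2^k$ forces the matching leftmost column), plus the $I_{(m-1)k}$ rows force $|S_0|$ extra columns overall when $m\ge 2$; a short case analysis should show that every column of $S_0$ one includes costs at least as much as it saves, so $|S|\ge 2^{k-1}m$ still holds.

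The main obstacle I anticipate is precisely this bookkeeping in the case $S_0\neq\emptyset$: one has to track, across all $m$ slabs simultaneously, how a single shared column from the $R_k$-block interacts with the weight-$1$ rows in each $T_k$-slab and with the $I_{(m-1)k}$ rows, and confirm it never produces a net saving. The cleanest route is probably an exchange/amortization argument: show that any $1$-free set using $t\ge1$ columns from the shared block can be modified into one using $t-1$ such columns without increasing its size, reducing to the already-settled case $S_0=\emptyset$; alternatively, set up the same integer-program relaxation as in \cref{thm:u_I bound} for the slab structure and check the dual witness scales by $m$. Everything else—the parity upper bound, the $S_0=\emptyset$ case, and the verification that $[T_k\mid R_k]$ is a permutation of $U_k$—is routine given the earlier results.
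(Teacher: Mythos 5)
Your overall strategy matches the paper's: the upper bound on $\varepsilon$ via \cref{prop:parity} (your arithmetic is right; with $n=(2^k-1)m-k$ rows and $n+k$ columns the bound is exactly $2^{k-1}m$), and a lower bound $u(U_{k,m})\ge 2^{k-1}m$ via the slab decomposition with $S_0$ the columns chosen from the shared last $k$. But the case you yourself flag as ``the main obstacle'', namely $S_0\neq\emptyset$, is the entire content of the lower bound, and you leave it unproven, offering only two speculative routes (an exchange/amortization argument, or an LP-duality setup) that you do not carry out and that are not needed. The gap closes by a direct count. Write $t=|S_0|>0$. The rows of slab $j$ see $[T_k\mid R_k]$, a column permutation of $U_k$, so $S_j\cup S_0$ is a nonempty $1$-free set in a copy of $U_k$ and \cref{thm:log_gap} gives $|S_j|+t\ge 2^{k-1}$, i.e.\ $|S_j|\ge 2^{k-1}-t$ for \emph{every} $j$. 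Separately, each column of $S_0$ has a single $1$ in each of the $m-1$ stacked copies of $I_k$, and the only other $1$ in such a row lies in the $I_{(m-1)k}$ block; this forces $(m-1)t$ distinct columns from that block. Summing the three disjoint contributions, $|S|\ge t+m(2^{k-1}-t)+(m-1)t=2^{k-1}m$ identically in $t$ --- the ``cost versus saving'' ledger you were worried about cancels exactly, with no amortization required.

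Two smaller points. First, your case $S_0=\emptyset$ is argued incorrectly: you cannot assert $|S_j|\ge 2^{k-1}$ for all $j$ and sum over $j$, since \cref{thm:log_gap} says nothing about slabs with $S_j=\emptyset$. The correct observation (the paper's) is that this case is vacuous: $U_{k,m}$ with its last $k$ columns deleted is upper-triangular of full rank and hence has no nonempty $1$-free set at all, so any nonempty $1$-free $S$ must have $S_0\neq\emptyset$. Second, the missing direction $\varepsilon\ge 2^{k-1}m$ does follow from $u(U_{k,m})\le\varepsilon(U_{k,m})$ as you note, so that part is fine.
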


\begin{proof}
    Again we bound $u$ from below and $\e$ from above.
    The bound on $u$ uses the matrices~$U_{k,m}$ from \cref{sec:construct} and the bound on $\e$ follows from \cref{prop:parity}.
    To show that $u(U_{k,m})\ge 2^{k-1}m$, we consider $1$-free sets in $U_{k,m}$.
    The matrix $U_{k,m}$ with its last $k$ columns removed has no nonempty $1$-free sets, since it is an upper-triangular, full-rank matrix.
    So consider a $1$-free set that includes $t>0$ columns among the last $k$ columns of $U_{k,m}$.
    By \cref{thm:log_gap} at least $(2^{k-1}-t)m$ additional columns are needed, specifically at least $2^{k-1}-t$ from every $T_k$ in the direct sum.
    The lower part of $U_{k,m}$ necessitates adding exactly $(m-1)t$ columns from the columns that contain the block $I_{(m-1)k}$.
    In total the cardinality of the $1$-free set at hand is at least $t+(2^{k-1}-t)m+(m-1)t=2^{k-1}m$, as claimed.
\end{proof}

\cref{thm:log-gap generalized} and the monotonicity of $u,\e$ (see \cref{prop:properties} below) yield

\begin{corollary}\label{prop:weak equivalence corollary}
    For every $k,n$ it holds that $u(n,n+k)\ge\e(n,n+k)-2^{k-1}$.
\end{corollary}


\section{Between \texorpdfstring{$u$}{u} and \texorpdfstring{$\e$}{epsilon} When \texorpdfstring{$n-m$}{n-m} is Bounded}

In this section we compare $u(m,n)$ and $\e(m,n)$ when $n-m\ge1$ is bounded from above.
A~most helpful resource in studying these problems is \cite{grassl2007bounds}, which records best possible linear codes of small lengths.
Below we refer to the notion of an elementary collapse that we now define.
Say that $A$ is a binary matrix and $A(i,j)=1$.
If the $i$-th row has Hamming weight of~$1$, then \emph{elementary collapse} with \emph{pivot} $(i,j)$ is the matrix that is obtained upon removing row $i$ and column $j$ from $A$.

Here is our main result: 

\begin{theorem}\label{thm:sec5 thm}
    \begin{enumerate}
        \item\label{case:e>u}
        $u(4,8)=3$ whereas $\e(4,8)=4$.
        This is the lexicographically first case where $u<\e$, i.e., the ordering that compares $n-m,m,n$ in this order.
        \item\label{case:k=1}
        $u(n,n+1)=\e(n,n+1)=n+1$.
        The case of equality is fully characterized.
        \item\label{case:k=2}
        $u(n,n+2)=\e(n,n+2)=\floor{\frac{2n+4}{3}}$.
        \item\label{case:k=3}   
        If $n\not\equiv -1 \bmod 7$, then $u(n,n+3)=\e(n,n+3)=\floor{\frac{4n+12}{7}}$.
        Also, $u(7m-1, 7m+2)=4m$ for every positive integer $m$.
    \end{enumerate}
\end{theorem}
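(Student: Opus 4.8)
The plan is to attack the four parts of \cref{thm:sec5 thm} separately, but with a common toolkit: the bound $\e(m,n) \le \e(m,m+k)$ is governed by \cref{prop:parity} (Griesmer), and since $u(A) \le \e(A)$ always, the upper bounds on $u$ come for free once the corresponding $\e$ bounds are in place. The work is therefore in two directions: (a) pin down $\e(n,n+k)$ exactly for $k=1,2,3$, which is a classical-looking extremal question about the maximum minimum distance of a length-$(n+k)$ binary code of dimension exactly $k$ (equivalently, an $[n+k, k]_2$ code), and (b) decide, for each of these small $k$, whether one can find a matrix simultaneously achieving the $\e$-bound \emph{and} having $u$ equal to it, or whether $u$ is forced to be strictly smaller — which is exactly the content of \cref{case:e>u}.

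For \cref{case:k=1}: a $1$-free set of columns in an $n\times(n+1)$ matrix $A$ either uses none of the ``extra'' structure or must be large; more directly, $\ker A$ is at least $1$-dimensional, so there is a nonzero codeword, and by Griesmer (\cref{prop:parity} with $k=1$) its weight is at most $\frac{n+1}{2}\cdot\frac{1}{1-1/2}\cdot\tfrac12$-type bound — but actually for $k=1$ the code is spanned by a single vector and $\e(A)$ is just the weight of that vector, which can be as large as $n+1$ (take $A$ with a single all-zero... no: take $\ker A$ spanned by $\mathbf 1$, forcing every row of $A$ to have even weight, achievable e.g. by the $n\times(n+1)$ matrix whose rows are $e_i + e_{i+1}$). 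So $\e(n,n+1)=n+1$; for $u$ one checks that this same matrix (a ``path'' incidence matrix) has no small $1$-free set, since removing any proper nonempty set of columns leaves a row of weight $1$ — this gives $u(n,n+1)=n+1$ and simultaneously characterizes equality (the kernel must be exactly $\langle \mathbf 1\rangle$, equivalently $A$ has all rows of even weight and rank $n$, forcing it up to permutation to be built from even-weight rows covering all columns — I would state the characterization as: $u(A)=n+1$ iff $\e(A)=n+1$ iff every proper nonempty column subset has a weight-$1$ row, and describe the structure explicitly).

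For \cref{case:k=2} and \cref{case:k=3}: here $\e(n,n+k)=\lfloor(1+\tfrac{1}{2^k-1})\tfrac{n+k}{2}\rfloor$ is the Griesmer bound of \cref{prop:parity}, tight because the simplex-flavored constructions $U_{k,m}$ (and their padded versions, exactly as in the proof of \cref{thm:log-gap generalized}) achieve it — one takes $m$ copies of the relevant simplex-code parity structure and pads with an identity block to reach the desired $n$, controlling the $O(k)$ slack by hand to get the exact floor. The genuinely new content is that for $k=2,3$ one can choose the extremal matrix so that $u = \e$ as well: for $k=2$ the matrix $U_{2,m}$ from \cref{sec:construct} has row weights $3$ and the same induction as in \cref{thm:log_gap}/\cref{thm:log-gap generalized} shows $u(U_{2,m}) \ge 2^{1}m$; combined with a padding argument (adding identity rows/columns, which can only help $u$ up to an additive term that the floor absorbs) this yields $u(n,n+2)=\e(n,n+2)$ for all $n$. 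For $k=3$ the same works \emph{except} in the residue class $n\equiv -1\bmod 7$: there the Griesmer bound $\e = \frac{4(n+3)}{7}$ is attained only by genuine (punctured) simplex-type matrices, and I expect to show that \emph{any} matrix meeting it in that residue class has a $1$-free set strictly below it, forcing $u(7m-1,7m+2)=4m < 4m + \text{(something)} = \e(7m-1,7m+2)$; the computation $\e(7m-1,7m+2) = \lfloor \frac{4(7m+2)}{7}\rfloor = 4m+1$ while $u = 4m$ isolates the gap. The base case of this phenomenon, $m=1$, $n=7\cdot 1 - 1 = 6$... gives $u(6,9)$, but the \emph{first} case historically where $u<\e$ over all $(m,n)$ with $n-m$ small is $u(4,8)=3 < 4 = \e(4,8)$, which I verify by exhibiting the extremal $\e$-matrix (a $[8,4]_2$ code of distance $4$, e.g. the extended Hamming code, whose parity-check matrix is $4\times 8$) and checking directly — by a short exhaustive or structural argument on $4\times 8$ binary matrices — that no such matrix has $u=4$, while $u=3$ is achieved.

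\textbf{The main obstacle} I anticipate is \cref{case:e>u} and the $n\equiv -1\bmod 7$ exception in \cref{case:k=3}: proving $\e(4,8)=4$ is routine (extended Hamming / Griesmer), and producing a matrix with $u(A)=3$ is easy, but ruling out $u(A)=4$ for \emph{every} $4\times 8$ binary matrix requires either a genuine case analysis or a clean structural reason — presumably that $u(4,8)=4$ would force, via the $u_I$ bound of \cref{thm:u_I bound} (which gives $u_I(4,4+4) \le \frac{4}{H_4}+O(4)$, far below $4$ once the $O(k)$ is pinned down for this tiny case) together with the standard-form reduction, a contradiction. More generally, the upper bounds $u(n,n+2)\le\lfloor\frac{2n+4}{3}\rfloor$ and $u(n,n+3)\le\lfloor\frac{4n+12}{7}\rfloor$ that match $\e$ are immediate, so the real effort is (i) the exact matching constructions for $u$ via padded $U_{k,m}$ with careful floor bookkeeping, and (ii) showing $u$ drops by exactly $1$ below $\e$ in the exceptional residue class — for which I would argue that the unique-up-to-equivalence near-extremal structure there (a $2$-fold or suitably truncated simplex arrangement) always admits a $1$-free set of size $4m$, and that one cannot do better because $u \le \e - 1 = 4m$ there, where the last inequality uses that $\e(7m-1,7m+2)=4m+1$ is attained only by matrices that are, up to row/column operations and padding, forced into a shape whose stopping distance we can compute.
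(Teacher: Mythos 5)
Your overall architecture (the Griesmer bound \cref{prop:parity} for the upper bounds, padded $U_{k,m}$ constructions for the matching lower bounds on $u$) agrees with the paper's, but two of your claims are wrong and one essential argument is missing. The most serious error is your treatment of $n\equiv -1\bmod 7$: you assert $\e(7m-1,7m+2)=4m+1$ and plan to prove $u<\e$ there by classifying the extremal codes. This has the logic backwards. Griesmer's bound constrains the \emph{distance}: a code of length $7m+2$ and dimension $\ge 3$ with distance $4m+1$ would need length at least $(4m+1)+\lceil\frac{4m+1}{2}\rceil+\lceil\frac{4m+1}{4}\rceil=7m+3$, so in fact $\e(7m-1,7m+2)\le 4m$; combined with $u(7m-3,7m)=4m$ (\cref{thm:log-gap generalized}) and the monotonicity $u(m,n)\le u(m+1,n+1)$, one gets $u(7m-1,7m+2)=\e(7m-1,7m+2)=4m$ with no gap and no classification needed. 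A second error is your proposed characterization in \cref{case:k=1}: ``$u(A)=n+1$ iff $\e(A)=n+1$'' is false. For instance, the $3\times 4$ matrix with rows $(1,1,0,0)$, $(0,1,1,0)$, $(1,1,1,1)$ has full rank and all row weights even, hence $\e=4$, yet its first three columns form a $1$-free set, so $u=3$. The correct statement (\cref{prop:tree characterization}) is that $u(A)=n+1$ iff $A$ is the edge--vertex incidence matrix of a tree, and proving it requires a genuine induction on the components of the associated hypergraph, not the one-line observation you sketch.

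The missing argument is $u(A)\le 3$ for \emph{every} $4\times 8$ binary matrix $A$, which is the substantive content of \cref{case:e>u}. Your fallback via \cref{thm:u_I bound} fails twice over: the reduction to standard form uses row operations, which preserve the code and hence $\e$ but do \emph{not} preserve $u$ (this is precisely why $u_I$ is introduced as a separate quantity and why \cref{with_identity_minor} is a special case rather than the general one); and even for matrices already in standard form, $H_4=25/12$ gives $4/H_4\approx 1.92$ while the proof of \cref{thm:u_I bound} only controls the additive slack to within $+k=+4$, so the bound does not exclude the value $4$. What is actually required, and what the paper supplies, is a direct case analysis: reduce to matrices whose columns all have weight $\ge 2$ (handling weight-$0$ and weight-$1$ columns separately, the latter via the unique $3\times 7$ residual matrix), note that repeated columns give $u\le 2$ and any three distinct columns of weight $\ge 3$ form a $1$-free set, and finish by counting the weight-$2$ columns. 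Your ``short exhaustive or structural argument'' names the right kind of step but does not carry it out. The constructions and floor bookkeeping for $k=2,3$ in the non-exceptional residue classes are in line with the paper and fine as sketched.
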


\begin{proof}
    We start with several simple observations:
    
    \begin{proposition}\label{prop:properties}
        \begin{enumerate}
            \item
            $u(m,n)\le\e(m,n)\le m+1$.
            \item
            Both $u(m,n)$ and $\e(m,n)$ increase with $m$ and decrease with $n$.
            \item\label{monotonicity}
            $u(m,n)\le u(m+1,n+1),~\e(m,n)\le\e(m+1,n+1)$.
            \item\label{collapse}
            The functions $u(\cdot), \e(\cdot)$ are invariant under elementary collapses.
        \end{enumerate}
    \end{proposition}
    
    \begin{proof}
        \begin{enumerate}
            \item
            $u\le\e$ by definition.
            $\e(m,n)\le m+1$ because the columns of every $m\times (m+1)$ matrix are linearly dependent.
            \item
            This is because every $1$-free/even set of columns in an $(m+1)\times n$ matrix $A$ is $1$-free/even for every $m\times n$ submatrix of $A$, and these properties are preserved when any column is added to $A$.
            \item
            Observe that $u(A)=u(B),~\e(A)=\e(B)$ for every $m\times n$ matrix $B$ and the matrix
            \begin{align*}
                A=\left(\begin{array}{cc}
                    B&\bf0\\
                    \bf0&1
                \end{array}\right).
            \end{align*}
            \item
            If $A$ has an elementary collapse with pivot $(i,j)$, then no even resp.\ $1$-free set of columns of $A$ can contain $j$.
            Therefore $u,\e$ remain unchanged under an elementary collapse with pivot $(i,j)$.
        \end{enumerate}
    \end{proof}

    We now prove the 4 items of the theorem in order.

    \subsection{Proof of \texorpdfstring{\cref{case:e>u}}{Item 1}: \texorpdfstring{$u(4,8)=3<4=\e(4,8)$.}{u(4,8)=3<4=epsilon(4,8)}}
    
    As recorded in \cite{grassl2007bounds}, there holds $\e(4,8)=4$ (see reference for a linear code that attains this bound).
    We now show that $u(4,8)=3$.
    The following matrix yields $u(4,8)\ge3$, since every nonempty set of $2$ or fewer columns has a row of weight $1$.
    \begin{align}
        \begin{pmatrix}
        1&0&0&0&1&1&0&0\\
        0&1&0&0&0&1&1&0\\
        0&0&1&0&0&0&1&1\\
        0&0&0&1&1&0&0&1
        \end{pmatrix}.
    \end{align}
    
    Next we show that $u(A)\le 3$ for every binary $4\times8$ matrix $A$.
    We reduce to the case that every column of $A$ has weight at least $2$.
    If $A$ has a zero column, then clearly $u(A)=1$.
    If some column of $A$ has weight $1$, say $a_{1,1}=1$ and $a_{i,1}=0$ for $i=2,3,4$, consider the submatrix~$B$ of $A$ that is obtained by erasing its first row and column.
    If $B$ has an all-zero column, then $u(A)\le2$, and if $B$ has two equal columns, then $u(A)\le3$.
    In the only remaining case
    \begin{align}
        B=\begin{pmatrix}
        0&0&0&1&1&1&1\\
        0&1&1&0&0&1&1\\
        1&0&1&0&1&0&1
        \end{pmatrix},
    \end{align}
    up to permutations of the rows and columns.
    We index the columns of $B$ with $2,\ldots,8$.
    Consider the weight $w=\sum_j a_{1,j}$ of row $1$ in $A$.
    If $w=1$, then $a_{1,j}=0$ for all $j\ge 2$.
    Consequently, $u(A)=u(B)=3$, since every $1$-free set in $B$ is also $1$-free in $A$.
    If $w\ge 3$ there are at least two indices $1<\beta<\alpha$ such that $a_{1,\alpha}=a_{1,\beta}=1$.
    If columns $\alpha, \beta$ of $B$ are the vectors $u$ resp.\ $v$, then some column $\gamma$ corresponds to $u\oplus v$ ($\bmod~2$ sum).
    Columns $\alpha, \beta, \gamma$ form a $1$-free set in~$A$.
    Finally, if $w=2$, there is exactly one index $\delta>1$ such that $a_{1,\delta}=1$.
    Then we can find a triplet of columns of the form $u, v, u\oplus v$ in $B$ none of which is column $\delta$.
    
    We can now assume that every column of $A$ has weight $2,3$ or $4$.
    $A$ has no repeated columns, or else $u(A)=2$.
    Also $A$ can have at most two columns of weight at least $3$, for any three such distinct vectors form a $1$-free set.
    Consequently $A$ has exactly two columns of weight at least~$3$ and each of the six columns of weight $2$.
    The latter $6$-tuple contains a $1$-free set of three columns.
    This establishes \cref{case:e>u} of \cref{thm:sec5 thm}.
    
    Note that $4,8$ are the \emph{minimal} $m,n$ for which $u(m,n)<\e(m,n)$, with respect to the lexicographical ordering that first compares $n-m$, then $m,n$: 
    the other parts of the present theorem show that equality holds when $n-m\le3$.
    One can verify that $u(m,n)=\e(m,n)$ for $(m,n)\in\{(1,5),(2,6),(3,7)\}$ using the table in \cite{grassl2007bounds}.

    \subsection{Proof of \texorpdfstring{\cref{case:k=1}}{Item 2}}
    
    It is clear that $u(n,n+1)=\e(n,n+1)=n+1$.
    Also $\e(A)=n+1$ for an $n\times(n+1)$ matrix~$A$ if and only if its rank is $n$ and all its row weights are even.
    
    An example of an $n\times(n+1)$ matrix $A$ with $u(A)=n+1$ is obtained by taking the edges vs.\ vertices incidence matrix $A_T$ of a tree with $n+1$ vertices.
    As we show, no other examples exist.
    
    \begin{proposition}\label{prop:tree characterization}
        If $u(A)=n+1$ for some $n\times (n+1)$ binary matrix $A$, then $A=A_T$ for some tree $T$ with $n+1$ vertices.
    \end{proposition}
    
    \begin{proof}
        $A$ cannot have a zero row, or else $u(A)\le\e(A)\le \frac{2(n+1)}{3}$, by \cref{prop:parity}.
        
        As in \cref{prop:properties}, \cref{collapse}, any row of weight $1$ in $A$ can be collapsed, without changing $\e$ and $u$.
        So we may and will assume that every row of $A$ weighs at least $2$.
        Let us view $A$ as the edges vs.\ vertices incidence matrix of a hypergraph $G=(V,E)$.
        An edge in $E$ of size $2$ (resp.\ $\ge 3$) is called \emph{light} (resp.\ \emph{heavy}).
        Let $L\subseteq E$ be the set of light edges.
        If all edges in $E$ are heavy, we can omit any single column of $A$ and obtain a matrix in which all rows weigh at least~$2$, contrary to our assumption that $u(A)=n+1$.
        
        The graph $(V,L)$ has no isolated vertices, for if $v\in V$ is incident with no light edge, then $V\setminus \{v\}$ is a $1$-free set, contrary to our assumption.
        If $L=E$, then the vertex set of any connected component of $G$ is a $1$-free set.
        Therefore $G$ is a connected graph with $n+1$ vertices and $n$ edges, i.e., a tree, as claimed.
        On the other hand, if $L\neq E$, the graph $(V,L)$ must be disconnected, since it has $n+1$ vertices and at most $n-1$ edges.
        In this case, let $(V_1,L_1),\ldots,(V_k,L_k)$ be the connected components of $(V,L)$.
        By the above $\sum_i|V_i|=n+1,~|L_i|\ge|V_i|-1$, so that $|L|=\sum_i|L_i|\ge n+1-k$, with equality if and only if $(V,L)$ is a forest with no isolated vertices.
        Consequently, at most $k-1$ edges in $E$ are heavy.
        
        Let $B$ be the edges vs.\ vertices matrix of the hypergraph that results from $G$ by shrinking each $V_i$ to a single new vertex $v_i$.
        Since $L\neq\emptyset$ this actually reduces the size of the matrix and we can use induction to prove the proposition.
        Every $1$-free set $S$ in $B$ yields a $1$-free set in $A$ by inflating each $v_i\in S$ to $V_i$.
        In particular $u(B)<k$ would imply $u(A)\le n$.
        Consequently, $B$ is a $(k-1)\times k$ matrix with $u(B)=k$.
        By induction it is the edge-vertex matrix of $K$, a tree with vertex set $\{v_1,\ldots,v_k\}$.
        Say that $v_1$ is a leaf of $K$, and let $e$ be the single edge of $K$ that is incident with $v_1$.
        We claim that either $V_1$ or $V\setminus V_1$ comprise a $1$-free set in $A$.
        Indeed, only the row corresponding to $e$ may have weight $1$ in the submatrix of $A$ corresponding to either $V_1$ or $V\setminus V_1$.
        But it is impossible that both cases occur, for that would mean that the edge $e$ has size $2$ contrary to the fact that $e$ is a heavy edge.
        The set of vertices (or columns) $V_1$ or $V\setminus V_1$ in which $e$ has weight at least $2$ comprises a $1$-free set of columns in $A$.
        Since both $V_1, V\setminus V_1$ are nonempty, this contradicts our assumption that $u(A)=n+1$.
        This establishes the proposition and \cref{case:k=1} of \cref{thm:sec5 thm}.
    \end{proof}

    \subsection{Proof of Items \texorpdfstring{\ref{case:k=2}}{3} and \texorpdfstring{\ref{case:k=3}}{4}}
    
    The proof for $k=2$ splits to cases according to the value of $n\bmod 3$.
    When $n\equiv 1\bmod 3$ we have $u(3m-2,3m)=\e(3m-2,3m)=2m$ by \cref{thm:log-gap generalized}.
    By \cref{prop:parity}, $u,\e$ do not change as we move to $n=3m-1$.
    Finally, for $n=3m$ we introduce the matrix
    \begin{align*}
        A\coloneqq\left(\begin{array}{cc}
             U_{2,m}&\mathbf{0}\\
             \begin{array}{cc}\mathbf{0}&I_2\end{array}&I_2
        \end{array}\right),
    \end{align*}
    with $U_{2,m}$ as defined in \cref{eq:ukm}.
    It is easy to see that $u(A)=\e(A)=2m+1$.
    By \cref{prop:parity} this is also the upper bound on $u(3m,3m+2),\e(3m,3m+2)$.
    We conclude that $u(n,n+2)=\e(n,n+2)=\floor{\frac{2n+4}{3}}$, establishing \cref{case:k=2}.
    
    The analysis when $k=3$ is somewhat more involved and proceeds according to the value of $n\bmod 7$.
    We start with the upper bound:
    By \cref{prop:parity}, $\e(n,n+3)\le\floor{\frac{4n+12}{7}}$.
    This bound is shown to be tight, except if $n\equiv -1 \bmod 7$, when it can be reduced by $1$ due to Griesmer's bound \cite{griesmer1960bound}:
    indeed, our general upper bound is $\e(7m-1,7m+2)\le\floor{\frac{28m+8}{7}}=4m+1$, but by Griesmer's bound if the code's distance is $4m+1$, then its length is at least $4m+1+\ceil{\frac{4m+1}{2}}+\ceil{\frac{4m+1}{4}}=7m+3$.
    So $\e(7m-1,7m+2)\le4m$.
    
    We proceed to deal with the lower bounds.
    The case $k=3$ of \cref{thm:log-gap generalized} gives $u(7m-3,7m)=\e(7m-3,7m)=4m$.
    Namely, $u=\e$ when $n\equiv 4\bmod 7$.
    
    \cref{monotonicity} of \cref{prop:properties} and \cref{prop:parity} yield
    \begin{align*}
        u(n-1,n+2)\le u(n,n+3)\le\e(n,n+3)\le\floor{\frac{4}{7}(n+3)}.
    \end{align*}
    So if $u(n-1,n+2)=\floor{\frac{4}{7}(n+3)}$, this trivially allows to derive the case $n\equiv r+1\bmod 7$ after establishing $u=\e$ in the case $n\equiv r\bmod 7$.
    This works verbatim for $r=1,4$.
    When $n\equiv5\bmod7$, a similar argument establishes $u(7m-1,7m+2)=\e(7m-1,7m+2)=4m$.
    It is left to establish the cases $n\equiv 0,1,3\bmod7$.
    Here, an additional argument is needed.
    To this end, we extend $U_{3,m}$ from \cref{sec:construct} to an $n\times (n+3)$ matrix for the appropriate $n$.
    This resembles the construction of $U_{k,m}$ from $U_k$, and the construction in the case $k=2$.
    In all three cases, these matrices show that $u(n,n+3)$ attains the upper bound on $\e(n,n+3)$, namely~$\floor{\frac{4}{7}(n+3)}$.
    Hence we get in each case a matrix $U$ such that $\e(n,n+3)\le\floor{\frac{4}{7}(n+3)}\le u(U)\le u(n,n+3)$.
    For illustration, when $n=7m$, we use the matrix $U_{3,m}$ to construct
    \begin{align*}
        U\coloneqq\left(\begin{array}{cc}
             U_{3,m}&\mathbf{0}\\
             \begin{array}{cc}\mathbf{0}&I_3\end{array}&I_3
        \end{array}\right).
    \end{align*}
    Note that $4m+1=u(U)\le u(7m,7m+3)$ and $\e(7m,7m+3)\le 4m+1$ from \cref{prop:parity}, so $u(7m,7m+3)=\e(7m,7m+3)=4m+1$.
    
    We note that \cref{case:k=3} holds as well when $n=1,2,3$, but we skip this verification.
    \bigbreak
    This concludes the proof of \cref{thm:sec5 thm}.
\end{proof}


\section{Open Problems}

\begin{problem}
    The most obvious question is \cref{main:conj} which remains open.
\end{problem}

\begin{problem}
    What is the smallest $c$ for which the conclusion of \cref{thm:min-weight rows} holds?
    Is it $5$?
\end{problem}

\begin{problem}
    The proof of \cref{thm:min-weight rows} suggests a more general setup.
    We seek a $1$-free set of columns in a binary matrix $A$.
    Having committed to some subset of columns, the rows of $A$ are split into:
    $I_0\sqcup I_1\sqcup I_*$, those of weight $0, 1$ and $\ge 2$, respectively.
    To extend our initially chosen set into a $1$-free set, we need an additional set of columns $J$, the weight of whose $I_0$ and $I_1$ rows differ from $1,0$ respectively.
    Under what conditions is it possible to pre-specify which row sums we wish to be $\neq 0$ and which $\neq 1$?
\end{problem}

\begin{problem}
    Let $u_3(m,n)$ denote $\max u(A)$ of an $m\times n$ binary matrix $A$ where every row has weight $3$.
    \cref{prop:tree characterization} implies that $u_3(n,n+1)<u(n,n+1)$, but perhaps $u_3(m,n)=u(m,n)$ when $n+1<m$.
    Some supportive evidence for this is that $u_3(4,8)=u(4,8)$, ${u_3(2^k-1-k,2^k-1)=u(2^k-1-k,2^k-1)}$.
    We note that more generally, $u_3((2^k-1)m-1,(2^k-1)m)=u((2^k-1)m-1,(2^k-1)m)$ holds, because the matrices $U_{k,m}$ can be modified so all rows have weight $3$ without changing $u,\e$.
\end{problem}

\begin{remark}
    If indeed, \cref{main:conj} is true, then its proof would require some substantial new ideas.
    Because, as \cref{thm:log_gap} shows, methods that work for square matrices and matrices with only a few more columns than rows as in \cref{thm:min-weight rows} and \cref{thm:log-gap generalized} are not likely to deliver a full answer.
\end{remark}

\bibliographystyle{alphaurl}
\bibliography{bibliography}

\end{document}